\documentclass[journal]{IEEEtran}
\ifCLASSINFOpdf
\else
\fi
\usepackage{subfigure} 
\usepackage{graphicx}
\usepackage{amssymb}
\usepackage{amsmath}
\usepackage{color} 
\usepackage{cite} 
\newtheorem{theorem}{Theorem}

\newtheorem{lemma}{Lemma}
\newtheorem{definition}{Definition}
\newtheorem{assumption}{Assumption}
\newtheorem{remark}{Remark}
\newtheorem{proof}{Proof}
\usepackage{tikz,xcolor,hyperref}
\definecolor{lime}{HTML}{A6CE39}
\DeclareRobustCommand{\orcidicon}{
	\begin{tikzpicture}
		\draw[lime, fill=lime] (0,0) 
		circle [radius=0.16] 
		node[white] {{\fontfamily{qag}\selectfont \tiny ID}}; 
		\draw[white, fill=white] (-0.0625,0.095) 
		circle [radius=0.007];	  
	\end{tikzpicture}
	\hspace{-2mm}}
\foreach \x in {A, ..., Z}{
	\expandafter\xdef\csname orcid\x\endcsname{\noexpand\href{https://orcid.org/\csname orcidauthor\x\endcsname}{\noexpand\orcidicon}}
}
\makeatother
\begin{document}
\title{A Marginal Cost Consensus Scheme with Reset Mechanism for Distributed Economic Dispatch in BESSs}
\newcommand{\orcidauthorA}{0000-0002-3788-700X}
\newcommand{\orcidauthorB}{0000-0002-3565-4800}
\newcommand{\orcidauthorC}{0000-0002-1415-4073}
\author{Yalin Zhang\orcidA{},
        Zhongxin Liu\orcidB{},~\IEEEmembership{Member,~IEEE,}
        and Zengqiang Chen\orcidC{}
\thanks{Manuscript received XX, XX; revised XX, XX;
	accepted XX, XX. This work is supported by the Tianjin Natural Science Foundation of China (Grant No.20JCYBJC01060) , and the National Natural Science Foundation of China (Grant No. 62103203, 61973175) and the General Terminal IC Interdisciplinary Science Center of Nankai University (Corresponding author: Zhongxin Liu.). 
	\par The authors are with the College of Artificial Intelligence, Nankai University, Tianjin 300350, and also with 
	the Tianjin Key Laboratory of Brain Intelligent Rehabilitation, Nankai University, Tianjin 300350, China (e-mail: zhangyl@mail.nankai.edu.cn; lzhx@nankai.edu.cn; chenzq@nankai.edu.cn).}}

\markboth{IEEE TRANSACTIONS ON SMART GRID, VOL. XX, NO. XX, XX XX}%
{Zhang \MakeLowercase{\textit{et al.}}: A Marginal Cost Consensus Scheme with Reset Mechanism for Distributed ED in BESSs}

\maketitle

\begin{abstract}
	Battery energy storage systems (BESSs) are often integrated into the smart grid as the key equipment for valley filling and peak suppression. However, the internal power consumption and capacity degradation of battery cells can not be ignored. In addition, there are certain electric trading between the owner of the grid-connected BESSs and the electric company (EC). Therefore, an expenditure function for grid-connected BESSs is constructed in this paper, in which internal power consumption, capacity degradation and power trading are covered while meeting the balance of power supply and demand. On this basis, the Karush-Kuhn-Tucker (KKT) condition of the function is summarized as the consensus problem of marginal cost (MC) converging to time-phased electricity price. Thus, we plan to design a distributed MC consensus scheme for economic dispatch (ED) in BESSs based on multi-agent systems (MASs) on a small-time scale. In view of the low control accuracy and response speed of the existing distributed proportional protocol, this paper proposes a distributed ED (DED) scheme with reset mechanism based on a proportional integral (PI) control. When the proportional term encounters zero crossing, the integral term of the control scheme is reset to 0, which ensures that the signs of the two are aligned, thus accelerating MCs convergence and restraining overshoot. Parameter conditions for consensus, regularity and Zeno-free behavior are given through the relevant theoretical analysis. Several simulation cases are designed to verify the designed DED algorithm.
\end{abstract}

\begin{IEEEkeywords}
Battery energy storage systems (BESSs), distributed economic dispatch (DED), marginal cost (MC), multi-agent systems (MASs), reset mechanism.
\end{IEEEkeywords}

\IEEEpeerreviewmaketitle
\begin{center}
	N\footnotesize{OMENCLATURE}
\end{center}
\normalsize
\begin{tabbing}
	\hspace{2cm} \= \kill
	BESS\> Battery energy storage system\\
	KKT\>  The Karush-Kuhn-Tucker\\
	MC\> Marginal cost\\
	MAS\> Multi-agent system\\
	ED\> Economic dispatch\\
	DED\> Distributed ED\\
	MG\> Microgrid\\
	UG\> Utility grid\\
	EC\> Electric company\\
	PC\> Proportional controller\\
	PI\> Proportional-integral\\
	ER\> Energy router\\
	$C^P$, $\rho$\> The income and electric price\\
	$P_i$, $P_e$\> The output power and the exchange power\\
	$\pi_i$, $\alpha_i$, $\beta_i$\> The inner power loss ratio and its coefficients\\
	$f_{Li}$\> The capacity deterioration cost\\
	$m_i$\> The coefficient of $f_{Li}$\\
	$F_c$, $D$\> The total expenditure function and load\\
	$L_{ac}$, $\gamma$\> A Lagrangian function and its multiplier\\
	$\lambda_i$, $\lambda$\> MC of BESS $i$ and its stack vector\\
	$u_i$, $u$\> The control input and its stack vector\\
	$\cal G$\> Communication graph\\
	$\cal V$, $\cal E$\> The vertex set and edge set of $\cal G$\\
	$N_i$, $d_i$\> Neighbor set and count of agent $i$\\
	$\cal A$, $\cal D$\> The adjacency matrix and in-degree matrix\\
	$b$, $b_{ii}$\> The adjacency vector and its component\\
	$B$\> The diagonalization of $b$\\
	$L$, $H$\> The Laplacian matrix of $\cal G$ and $H=L+B$\\
	$\eta_i(H)$\> Eigenvalue of $H$\\
	$\eta_m$, $\eta_M$\> The maximum and minimum values of $\eta_i(H)$\\
	$\xi_i$, $\nu_i$\> Proportional and integral terms \\
    $k$, $h$\> Control gains in $u_i$\\
    $\xi$, $\nu$, $x$\> Some useful stack vectors\\
    $\Phi$\> A matrix for the base system\\
    $\Phi_0$\> A matrix for the reset mechanism\\
    $t_l^i$. $T_i$\> Reset instant and its set\\
	$\mu_i$\> Eigenvalue of $\Phi$\\
	$e_i$, $e$\> The state error and its stack vector\\
	$V$, $V_d$\> Two candidate Lyapunov functions\\
	${\cal L}_u V$\> The set-valued derivative\\
	$\partial V$\> The generalized gradient\\
	$r_i$\> The ratio of $\xi_i$ to $\nu_i$ in real-time\\
	$\Delta$\> A constant dwell-time\\
\end{tabbing}

\section{Introduction}

\IEEEPARstart{A}{s} more renewable energy is integrated to communities in the form of distributed power generation, its randomness and intermittence bring that power provided by microgrid (MG) suffers low quality, thus endangering the utility grid (UG) \cite{wangApplicationEnergyStorage2022, solyaliComprehensiveStateoftheartReview2022, ghanjatiOptimalSizingEnergy2022}. In addition, the power demand of various loads is gradually increasing, and the power shortage gradually appears in the grid-connected MG during the peak hours of power consumption \cite{ghanjatiOptimalSizingEnergy2022, yangModellingOptimalEnergy2022, dahiruComprehensiveReviewDemand2023}. To alleviate the current difficulties, efforts are made by both the supply side and the demand side to this end. Therein, to guide users to use electricity reasonably, the time-phased electricity price is adopted by electric company (EC) \cite{wangMultiagentbasedCollaborativeRegulation2022, hossainlipuReviewControllersOptimizations2022}. For MG shareholders, fortunately, battery energy storage systems (BESSs) are introduced, which are conducive to make up for the shortcomings of renewable energy, thus improving the power supply quality of MG and also greatly easing the power tension during peak load \cite{rouholaminiReviewModelingManagement2022,caleroReviewModelingApplications2022}. Besides, the immediate relationship between electricity production and consumption has been severed by BESSs, which realizes the transfer of power along the time dimension \cite{rouholaminiReviewModelingManagement2022, caleroReviewModelingApplications2022, ulhassanComprehensiveReviewBattery2022}.
\par Although the introduction of BESS has improved the current predicament, the problems that follow have gradually emerged. Both in the charging and discharging modes, a certain amount of power is consumed by the internal impedance of the battery cell \cite{hossainlipuReviewControllersOptimizations2022,yuFrequencySynchronizationPower2021b}. Due to long-term operation, the capacity of battery cells will be reduced due to aging \cite{hossainlipuReviewControllersOptimizations2022}. All these leads to the increase in the operating costs of MG. Therefore, for grid-connected BESSs, it is necessary to design an economic dispatch (ED) algorithm for MG shareholder to reduce costs. In addition, there is electricity trading between the shareholder of MGs and EC \cite{forero-quinteroProfitabilityAnalysisDemandside2022,tanExtensionsLocationalMarginal2022}. That is, excess electricity is sold to EC. Or, insufficient electricity is purchased from EC.  
\par Currently, many excellent solutions to the ED problem of MG are given by scholars, both in centralized or distributed manners. With regard to the ED problem, many excellent results are achieved by using centralized methods, such as $\lambda$ iteration \cite{wangDistributedOptimalPower2021}, gradient search \cite{liNewDistributedEnergy2021,hassanImprovedMantaRay2021} and some intelligent search algorithms \cite{wangMultiagentbasedCollaborativeRegulation2022}. However, the centralized method has been gradually replaced by the distributed method for the limitations of high communication cost, heavy computing burden, lack of privacy protection, etc.
\par Recently, the distributed ED (DED) scheme based on multi-agent systems (MASs) has been favored by scholars. In these ED problems, some optimizable objectives, such as, generation cost \cite{ullahComputationallyEfficientConsensusBased2021, saeidiniaAutonomousControlDC2023, xuConsensusActivePower2021, sahooLocalizedEventDrivenResilient2021, zaeryNovelFullyDistributed2021, alviNovelIncrementalCost2022,songCostBasedAdaptiveDroop2021, wangDisEHPPCEnablingHeterogeneous2022, liDistributedControlStrategy2021, pengDistributedPeriodicEventTriggered2022, chenDistributedEconomicDispatch2021}, transmission loss \cite{ullahComputationallyEfficientConsensusBased2021, saeidiniaAutonomousControlDC2023}, etc, are considered singly or in combination. Besides, whether these cost functions are single or combined, they are strongly convex of output power in these distributed schemes. In these distributed schemes, authors in \cite{ullahComputationallyEfficientConsensusBased2021, wangDisEHPPCEnablingHeterogeneous2022,liDistributedControlStrategy2021, pengDistributedPeriodicEventTriggered2022, chenDistributedEconomicDispatch2021} address the ED problems on a large time scale. These scholars provide some good solutions for the tertiary control of MG. For the rest, another group of researchers in \cite{ saeidiniaAutonomousControlDC2023, xuConsensusActivePower2021, sahooLocalizedEventDrivenResilient2021, zaeryNovelFullyDistributed2021, alviNovelIncrementalCost2022,songCostBasedAdaptiveDroop2021} seem to show solicitude for the schemes on a small time scale. This is because they try to construct economic droop control with optimal power to achieve the restoration of frequency or voltage both in AC and DC MGs, i.e., secondary control. Without exception, these schemes are methods based on MC \cite{tanExtensionsLocationalMarginal2022, spanglerPowerGenerationOperation2014, binettiDistributedConsensusBasedEconomic2014, babazadeh-dizajiDistributedHierarchicalControl2020} consensus to address the ED problem of MG, which can be effectively transplanted to BESSs. Most of these schemes are proportional controllers (PCs). Occasionally, finite/fixed time control schemes \cite{xuConsensusActivePower2021, zaeryNovelFullyDistributed2021} are also designed to accelerate system convergence.
\par In fact, some scholars have made a preliminary exploration on the ED problem in BESSs, and unexpectedly designed a distributed scheme based on MC. In a MG containing distributed generators and BESSs, a PC based on a distributed consensus protocol is designed in \cite{yuFrequencySynchronizationPower2021b} to reach consensus on MC so that each BESS behaves the optimal charge/discharge loss power while frequency is restored. Authors in \cite{chenDistributedCooperativeControl2021} design a distributed control algorithm to reach MCs consensus, and then realize the optimal ED of BESSs, in which the power consumed by the internal impedance of a battery is considered in the objective function. Also, charging/discharging efficiency is optimized in \cite{jinManageDistributedEnergy2022, zhaoDifferentialPrivacyEnergy2022} on a large time scale. In addition, battery capacity degradation is considered in the objective function in \cite{sitchEconomicControlHybridElectric2022} on a large time scale. 
\par The above authors provide excellent distributed solutions for ED problems in MGs and BESSs on a large or small time scale. Unfortunately, these control schemes are all PCs. A PC is effective but not efficient for ED problems, which encounters slow convergence speed and low control accuracy. Besides, great difficulties in physical implementation and chattering on MCs and control input are brought by the finite/fixed time controller with fractional order terms. Therefore, this paper focuses on the reset controller \cite{banosResetControlSystems2011, zhaoResetControlConsensus2021, chengResetControlLeaderfollowing2022, huResetControlConsensus2022,  mengResetControlSynchronization2019, zhaoResetControlConsensus2021a} based on the proportional-integral (PI) controller, and takes the composite objective of capacity degradation, charge/discharge loss and the cost of purchasing electricity in BESS on a small time scale as the research object, trying to design an ED algorithm with MC fast response and convergence speed and high control accuracy. Therefore, BESSs in a resistive network can achieve the optimal discharge power to reduce operational expenses. The specific work is listed as follow:
\begin{itemize}
	\item A distributed PI controller with reset mechanism is designed to reach MC consensus, which can improve the dynamic and steady-state performance and contribute to solve the multi-objective ED problem in BESSs.
	\item Both MC consensus and reset regularity are well analyzed, and parameter conditions are given. Regularity ensures that the reset mechanism is enabled.
	\item Zeno behavior and progressiveness of the controller are analyzed. A Zeno-free reset mechanism is given. The MC consensus under a PC and the finite-time controller are compared with the designed scheme to illustrate the progressiveness.
\end{itemize}
\par Next, the multi-objective ED problem in BESS is analyzed in section II. And the theoretical analysis of the designed scheme is arranged in section III. Several simulation cases and related analysis are arranged in section IV. Finally, the conclusion is draw in section V. 
\section{Preliminaries}
\par For a resistive network, its operator needs to pay a fee for the absorbed power or charge the power company for the power released into the grid. Due to the internal impedance, each battery encounters different degrees of power loss. With the passage of time, the capacity of each battery will deteriorate to varying degrees. In this section, a convex expenditure model considering multiple objectives is constructed, which contains an income model, a inner power loss model in charge/discharge mode and a energy storage capacity deterioration model. Based on the MC, the optimal ED on a small time scale is analyzed, and the control objectives are given.
\subsection{BESS and its expenditure model}
For a set multiple BESSs containing $n$ batteries integrated to smart grids, the residual power is fed back to UG so that the shareholders of BESSs obtain the profits provided by EC. Or, power is absorbed from UG and partially stored. In this case, the shareholders should pay for this energy. Therefore, the income model \cite{hossainlipuReviewControllersOptimizations2022, chenDistributedEconomicDispatch2021, zhaoDifferentialPrivacyEnergy2022} of BESSs is as follow
\begin{equation}
	C^P=\rho P_e,
\end{equation}
where $P_e$ is the exchange power between BESSs and UG, $\rho$ is the agreed electricity price between the shareholders and EC, $P_e>0$ ($P_e<0$) means that BESSs absorb (release) power from (to) UG. The inner power loss ratio of charge/discharge model \cite{yuFrequencySynchronizationPower2021b, chenDistributedCooperativeControl2021, jinManageDistributedEnergy2022, zhaoDifferentialPrivacyEnergy2022} of a BESS is often a linear function of output power on a small time scale, i.e.,
\begin{equation}
	\pi_i=\alpha_iP_i+\beta_i,
\end{equation}
where $\alpha_i$ and $\beta_i$ are two constant coefficients of the inner loss ratio $i$. Energy storage capacity deterioration cost \cite{hossainlipuReviewControllersOptimizations2022} is expressed as
\begin{equation}
	f_{Li}=m_i\pi_iP_i,
\end{equation}
where $m_i$ is called as the capacity deterioration coefficient, $P_i$ is the output power of BESS $i$. Considering income, loss and capacity deterioration cost, the following multi-objective operating expenditure function is constructed in (\ref{4}), which is subjected to the supply and demand balance,
\begin{equation}
	\begin{array}{l}
		{F_c} = \sum\limits_{i = 1}^n {({\eta _i}{P_i} + {f_{Li}})}+ {C^P}\\
		\quad\,\, = \sum\limits_{i = 1}^n {(1 + {m_i})({\alpha _i}P_i^2 + {\beta _i}{P_i})}+\rho (t){P_e}\\
		{\rm{         }}s.t.\quad \sum\limits_{i = 1}^n {{P_i}+{P_e}-D}  = 0,
	\end{array}\label{4},
\end{equation}
where $D$ is the total load, the expenditure function (\ref{4}) involves BESS and the UG operated by EC. It is necessary to develop a reasonable output power strategy to reduce cost. 
\subsection{Analysis for optimal ED}
\par Since the expenditure function (\ref{4}) is convex, the optimal condition may exist for BESSs. To this end, construct a Lagrangian function $L_{ac}$ for (\ref{4}),
\begin{equation}
	L_{ac}=F_c-\gamma(\sum\limits_{i = 1}^n {{P_i}+{P_e}-D}  = 0),
\end{equation}
where $\gamma$ is a Lagrange multiplier. Thus, we can get the KKT conditions for BESSs as follow
\begin{subequations}
	\begin{equation}
		(1+m_i)(2\alpha_iP_i+\beta_i)-\gamma=0 
		\label{6a},
	\end{equation}
	\begin{equation}
		\rho-\gamma=0
		\label{6b},
	\end{equation}
\end{subequations}
where $\lambda_i=(1+m_i)(2\alpha_iP_i+\beta_i)$ is called MC, (\ref{6a}) and (\ref{6b}) is the optimal ED condition for BESS $i$. From (\ref{6a}) and (\ref{6b}), it can conclude that the total MG expenditure is the lowest if the MC of each battery reaches electricity price.
\subsection{Control objective}
A simplified discharge control structure with an ED algorithm for a grid connected BESS is shown in Fig. \ref{figei}, which contains a battery unit, a DC-DC chopper, an DED controller, and a current calculation unit. To design an ED scheme, MC of each BESS is given by the following dynamic on a small time scale,
\begin{equation}
	\label{7}
	\dot\lambda_i=u_i.
\end{equation}
Denote two vectors as $\lambda=[\lambda_1, \lambda_2,\cdots, \lambda_n]^T$, $u=[u_1, u_2,\cdots, u_n]^T$.
\begin{figure}
	\centering
	\includegraphics[width=7cm]{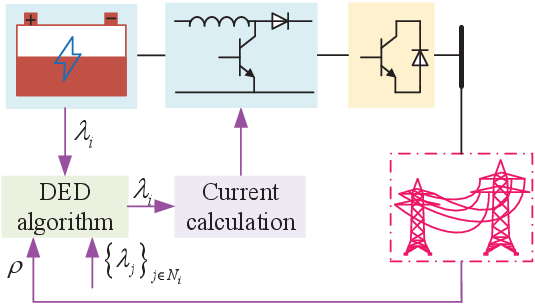} \caption{MC consensus based ED framework for BESSs.\label{figei}}
\end{figure}
\par For a multiple BESSs in a resistive network, ensuring its economic operation has become the general goal and basic principle of controller design. In this paper, we pay more attention to reduce the total net expenditure. Thus, we are going to design a DED controller so that the total expenditure function of the grid-connected BESSs achieves the optimal solution if MCs reach the electricity price. That is, mathematically,
	$$\lim_{t \to +\infty}{\lambda_i}=\rho,$$
where $\rho$ denotes the time-phased electricity price. Generally, EC often signs time-phased electricity price contracts with users.
\section{DED algorithm with reset mechanism}
\par To obtain the optimal power flow and reduce the expenditure of operating the BESSs, most of the common MC-based distributed schemes are PC or finite-time protocols, which exhibit low control accuracy or chattering, respectively. Thus, a distributed PI algorithm with reset mechanism based on MASs is designed, as shown in Fig. \ref{figr}, to drive MCs to reach the time-phased electricity price more quickly, accurately and without chattering. Stability, regularity, progressiveness and Zeno behavior are well analyzed.
\begin{figure}
	\centering
	\includegraphics[width=8cm]{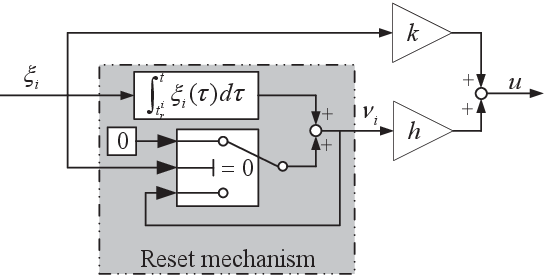} \caption{The designed DED algorithm with reset mechanism.\label{figr}}
\end{figure}
\subsection{Graph theory and some related conclusions}
\par In the distributed control scheme of BESSs based on MASs, each BESS is governed by an agent. Each agent communicates with others over a communication graph $\cal G$, where each agent and communication link are denoted as a node and an edge respectively. Denote ${\cal V}=\left\{v_1,\cdots, v_n\right\}$ as the node set. It is assumed that $(v_i, v_j)$ implies that agent $i$ can access agent $j$, which is so-called a neighbor of agent $i$. Define the edge set as ${\cal E}=\left\{(v_i, v_j)|v_i,v_j\in {\cal V} \right\}$ and neighbor set of agent $i$ as $N_i=\left\{v_j|(v_i, v_j)\in \cal E\right\}$. Denote $d_{ii}$ as the number of neighbors of agent $i$ and a diagonal matrix ${\cal D}=diag\left\{d_{11}, \cdots, d_{nn}\right\}$. An adjacency matrix ${\cal A}=[a_{ij}]$ signifies that $a_{ij}=1$ if $j\in N_i$ and $a_{ij}=0$ otherwise. Thus, the Laplace matrix  corresponding to the graph $\cal G$ can be expressed as $L={\cal D}-{\cal A}$, which plays an important role in controller design. 
A common assumption is given in Assumption \ref{assumption 1}.
\begin{assumption}
	\label{assumption 1}
	\cite{linNecessarySufficientGraphical2005} 
	In this paper, the communication graph $\cal G$ is undirected and connected.\label{assum 1}
\end{assumption}
\par Besides, for a MASs with a leader, an adjacency vector $b=[b_{11}, \cdots, b_{nn}]$ needs to be defined, where $b_{ii}$ implies that agent $i$ can access the leader. Denote a diagonal matrix $B$ as $B=diag\left\{b_{11}, \cdots, b_{nn}\right\}$. Thus, a common lemma without proof is given as follows.
\begin{lemma}
	\cite{niLeaderfollowingConsensusMultiagent2010} 
	Denote $H=L+B$, and $H$ is positive define. The eigenvalues of $H$ are ordered as $0<\eta_m=\eta_1(H)\le \eta_2(H)\le \cdots \le \eta_n(H)=\eta_M$.
\end{lemma}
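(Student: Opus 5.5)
The plan is to show that $H=L+B$ is symmetric, that its quadratic form is nonnegative, and that this form vanishes only at the origin, under the implicit standing hypothesis that at least one agent has access to the leader, i.e., $b_{kk}=1$ for some $k$. Positive definiteness is not the only thing to prove: the ordering $0<\eta_m\le\cdots\le\eta_M$ is real, which is exactly what symmetry supplies, and strictly positive, which is exactly what positive definiteness supplies. So establishing symmetry and positive definiteness gives the whole statement.

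\emph{Symmetry.} Under Assumption \ref{assumption 1} the graph is undirected. Hence $a_{ij}=a_{ji}$, so $L={\cal D}-{\cal A}$ is symmetric. The matrix $B$ is diagonal, so $H=L+B$ is real symmetric. Its eigenvalues are therefore real, admit an ordering, and
\[
\eta_m=\min_{\|x\|=1}x^THx .
\]

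\emph{Nonnegativity.} For any $x\in\mathbb{R}^n$, the standard Laplacian identity gives
\[
x^THx=\frac12\sum_{i,j}a_{ij}(x_i-x_j)^2+\sum_i b_{ii}x_i^2\ \ge 0 .
\]

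\emph{Definiteness.} Suppose $x^THx=0$. Both sums above are nonnegative, so each must vanish.
\begin{itemize}
\item The first sum vanishing forces $x_i=x_j$ for every edge. Since $\cal G$ is connected, this gives $x=c\mathbf{1}$ for some scalar $c$.
\item The second sum then equals $c^2\sum_i b_{ii}$. Because $b_{kk}=1$ for some $k$, this forces $c=0$.
\end{itemize}
Thus $x^THx>0$ for all $x\ne0$, so $H$ is positive definite. Consequently $\eta_m>0$, and the claimed ordering follows.

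The main obstacle is not technical but one of hypotheses. The statement as written omits the requirement that at least one $b_{ii}$ is nonzero, and it also does not state that $b_{ii}\ge0$. The plan is to make both conditions explicit, with the leader adjacency taking values $b_{ii}\in\{0,1\}$ and $B\neq0$. Without $B\neq0$, $H=L$ is singular, because $L\mathbf{1}=0$, and the lemma fails. The connectivity step, namely that a zero Laplacian quadratic form forces consensus, is routine.
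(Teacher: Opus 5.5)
Your proof is correct and complete. The paper states this lemma without proof, citing Ni and Chen, and your quadratic-form argument is the standard justification of this fact: symmetry from undirectedness, $x^THx=\tfrac12\sum_{i,j}a_{ij}(x_i-x_j)^2+\sum_i b_{ii}x_i^2$, connectivity forcing $x=c\mathbf{1}$, and a nonzero pinning gain forcing $c=0$. You are also right that the statement silently needs $b_{ii}\ge 0$ with $B\neq 0$ (otherwise $H=L$ is singular); this holds in the paper's setting, where agent~1 receives the price signal.
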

\subsection{DED algorithm with reset mechanism}
In the grid-connected mode, BESSs can absorb/release power from/to UG. According to the analysis in subsection C of section II, the total expenditure of BESSs is optimal when MC of each BESS reaches the time-phased electricity price. 
With the support of the reset control, the basic performance limitations of linear control system design can be overcome, i.e., the contradiction between overshoot and response speed. Here, we will give the DED algorithm in the leader-following manner. To improve control accuracy, avoid overshoot and speed up convergence, reset mechanism is introduced.
\par Define the state difference in below
\begin{equation}
	\label{8}
	\xi_i=\sum_{i=1}^{n}a_{ij}(\lambda_i-\lambda_j)+b_{ii}(\lambda_i-\rho(t))
\end{equation}
Then, the ED scheme under the guidance of electricity price is as follows
\begin{subequations}
	\begin{equation}
		\label{9a}
		\dot \nu _i =\xi _i,
	\end{equation}
	\begin{equation}
		\label{9b}
		u_i=-k\xi_i-h\nu_i,
	\end{equation}
\end{subequations}
where $k,h>0$ are the control coefficients, the reset mechanism based on a so-called Clegg integrator \cite{cleggNonlinearIntegratorServomechanisms1958} is designed below,
\begin{equation}
	\left\{\begin{array}{l}
		\dot \nu _i =\xi _i,\quad \,\,\,\,if\,\xi_i \ne 0\\
		\nu _i ({t^ + }) = 0,\,if\, \xi _i = 0
	\end{array} \right.\label{10},
\end{equation}
where the initial value of $\nu_i$ is 0. It is easy to observed that $\nu_i$ has a same sign as $\xi_i$. Denote  $\nu=col\left\{\nu_1,\nu_2,\cdots,\nu_n\right\}$ and $\xi=col\left\{\xi_1,\xi_2,\cdots,\xi_n\right\}$. Thus, the following closed loop dynamic system is constructed
\begin{subequations}
	\begin{equation}
		\dot x=\Phi x, \label{11a}
	\end{equation}
    \begin{equation}
	    x(t^+)=\Phi_0 x, \label{11b}
    \end{equation}
\end{subequations}
where $x=[\xi^T\quad \nu^T]^T$, $\Phi=\left[\begin{matrix}
	-kH & -hH\\
	I & 0
\end{matrix}\right]$, $\Phi_0=\left[\begin{matrix}
I & 0\\
0 & 0
\end{matrix}\right]$.
\par To facilitate subsequent analysis, regularity definition and some useful lemmas without proof are given as follows.
\begin{definition}
	(\ref{11a}) is referred to as the base system \cite{banosResetControlSystems2011}, and (\ref{10}) and its equivalent form (\ref{11b}) are called as reset mechanism, respectively. The base system with reset mechanism is regular if the base system is stable and the reset mechanism is ensured to be enable. That is, define a set $T_i=\left\{t^i_l\right\}_{l=1}^{\infty}$ and $t^i_l$ is called the reset instant for agent $i$, $i\in \cal V$. If there exists at least one agent $i\in \cal V$ with $T_i\ne \emptyset$ and $t^i_l<\infty$ \cite{mengResetControlSynchronization2019} and all eigenvalues of $\Phi$ are with negative real parts, the DED scheme, i.e., (\ref{9a}) and (\ref{9b}) with (\ref{10}), is regular. \label{Definition 1}
\end{definition}
\begin{lemma}
	As for $\Phi$ of the base system, only if the dominant ones are complex, there exist $i\in [1,2,\cdots, n]$ so that $\nu_i$ is reset to 0 at least once \cite{banosResetControlSystems2011}. That is, all nonzero eigenvalues are complex \cite{mengResetControlSynchronization2019}. \label{Lemma 3}
\end{lemma}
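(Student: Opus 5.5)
The statement to prove is the ``only if'' direction: if some $\nu_i$ is reset to $0$ at least once, then the dominant eigenvalues of $\Phi$ are complex (and, in the language of \cite{mengResetControlSynchronization2019}, all nonzero eigenvalues are complex). I will argue by contraposition. Assume the eigenvalues of $\Phi$ that govern the response are real, and show that no $\xi_i$ can cross zero while $\nu_i\neq 0$ has accumulated. Then the Clegg condition in (\ref{10}) never fires after the initial instant, so $T_i=\emptyset$ for every $i$.

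The first step is to decouple $\Phi$ using the spectrum of $H$. Since $H$ is symmetric positive definite by the Lemma above, write $H=Q\,\mathrm{diag}\{\eta_1,\dots,\eta_n\}Q^T$ with $Q$ orthogonal. In the coordinates $\tilde\xi=Q^T\xi$ and $\tilde\nu=Q^T\nu$, (\ref{11a}) splits into $n$ planar systems
\[
\dot{\tilde\xi}_j=-k\eta_j\tilde\xi_j-h\eta_j\tilde\nu_j,\qquad \dot{\tilde\nu}_j=\tilde\xi_j .
\]
Their characteristic polynomials are $\mu^2+k\eta_j\mu+h\eta_j=0$. Hence the eigenvalues of $\Phi$ are $\mu_{j}^{\pm}=\bigl(-k\eta_j\pm\sqrt{k^2\eta_j^2-4h\eta_j}\bigr)/2$, and they are real exactly when $k^2\eta_j\ge 4h$. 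Since $\eta_j>0$, every eigenvalue is nonzero. Consequently ``all nonzero eigenvalues complex'' is equivalent to $k^2\eta_M<4h$, and the dominant pair, attached to $\eta_m$, is complex iff $k^2\eta_m<4h$.

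The second step assumes the real case and studies each mode. With $\tilde\nu_j(0)=0$, each $\tilde\xi_j$ satisfies the overdamped second-order equation $\ddot{\tilde\xi}_j+k\eta_j\dot{\tilde\xi}_j+h\eta_j\tilde\xi_j=0$. Its initial data are $\tilde\xi_j(0)$ and $\dot{\tilde\xi}_j(0)=-k\eta_j\tilde\xi_j(0)$. Writing $\tilde\xi_j(t)=c_1e^{\mu^+t}+c_2e^{\mu^-t}$ and solving for $c_1,c_2$, I expect $c_1$ and $c_2$ to have the same sign as $\tilde\xi_j(0)$. The reason is that the initial slope $-k\eta_j\tilde\xi_j(0)$ lies between $\mu^-\tilde\xi_j(0)$ and $\mu^+\tilde\xi_j(0)$, since $\mu^++\mu^-=-k\eta_j$ and both roots are negative. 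A sum of same-signed real exponentials never vanishes in finite time. So each modal state decays monotonically, without a zero crossing; the critically damped case is handled the same way with $(c_1+c_2t)e^{\mu t}$.

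The third step, which is the main obstacle, transfers this back to the agent coordinates $\xi_i=\sum_j Q_{ij}\tilde\xi_j$. A sum of non-oscillating modes can still cross zero if the modes carry opposite signs. I would first handle the dominant-mode argument used in \cite{banosResetControlSystems2011}: for large $t$, $\xi(t)$ is aligned with the eigenvector of the slowest real eigenvalue, so its sign pattern is eventually constant and nonzero. Any reset therefore cannot be sustained, and under the regularity notion of Definition \ref{Definition 1} the reset mechanism is not genuinely enabled. The cleaner route, which I would try first, is to state the lemma at the level of the modal systems, as \cite{mengResetControlSynchronization2019} does: a reset of the $j$-th mode requires $\tilde\xi_j$ to vanish with $\tilde\nu_j\neq0$, and step two rules this out under real eigenvalues. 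Contraposition then gives that a reset forces $k^2\eta_j<4h$, i.e.\ complex eigenvalues.
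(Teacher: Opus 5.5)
\paragraph{Sign error in step two.} The claim that $c_1$ and $c_2$ share the sign of $\tilde\xi_j(0)$ is false. Since $\mu^{+}+\mu^{-}=-k\eta_j$ and $\mu^{-}<\mu^{+}<0$, the initial slope ratio $-k\eta_j=\mu^{+}+\mu^{-}$ is strictly smaller than $\mu^{-}$. It lies outside $[\mu^{-},\mu^{+}]$, not between the roots. The two conditions are $c_1+c_2=\tilde\xi_j(0)$ and $c_1\mu^{+}+c_2\mu^{-}=(\mu^{+}+\mu^{-})\tilde\xi_j(0)$. Solving them gives $c_1\mu^{-}+c_2\mu^{+}=0$, i.e.\ $c_1=-c_2\mu^{+}/\mu^{-}$, so $c_1$ and $c_2$ have opposite signs. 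For $\tilde\xi_j(0)>0$ the fast mode has $c_2>0$ and the slow mode has $c_1<0$.

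Hence $\tilde\xi_j$ vanishes at $t^{*}=\ln(\mu^{-}/\mu^{+})/(\mu^{+}-\mu^{-})>0$, and there $\tilde\nu_j(t^{*})=\int_0^{t^{*}}\tilde\xi_j\,d\tau\neq 0$. The critically damped case is no better: $\tilde\xi_j(t)=\tilde\xi_j(0)(1+\mu t)e^{\mu t}$ vanishes at $t=2/(k\eta_j)$. So every overdamped or critically damped mode crosses zero exactly once, and your ``cleaner'' modal route fails on its own terms.

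\paragraph{Step three.} This step has separate problems. The agent-wise law (\ref{10}) zeroes a single $\nu_i$, which perturbs every $\tilde\nu_j$, so resets do not act mode by mode. Also, eventual sign constancy of $\xi$ only limits how many resets occur, whereas Definition \ref{Definition 1} already counts a single reset.

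\paragraph{The target statement is false.} More fundamentally, this cannot be repaired, because the contrapositive you are after is false for this loop. The plant $\dot\lambda_i=u_i$ is itself an integrator, so together with the integral action the loop contains two integrators. Along the Hurwitz base system started from $\nu(0)=0$, one has $\int_0^\infty\xi_i(t)\,dt=\lim_{t\to\infty}\nu_i(t)-\nu_i(0)=0$. Therefore every component $\xi_i\not\equiv 0$ must change sign. Before the first zero of any $\xi_i$ the reset and base trajectories coincide, and at that instant $\nu_i\neq 0$.

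Consequently a nontrivial reset occurs for all $k,h>0$ and every $\xi(0)\neq 0$, whatever the eigenvalue type. The eigenvalue type only governs how many zero crossings the base response has: finitely many when all eigenvalues are real, typically infinitely many when a complex pair dominates. It does not govern whether a crossing occurs.

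The paper gives no proof of this lemma, only citations. The direction it actually uses in Theorem \ref{Theorem 1} (complex eigenvalues imply at least one reset) is true, but for the eigenvalue-independent reason above. The literal `only if' direction you set out to prove cannot be established. You would need to change the conclusion, e.g.\ to a statement about the number of zero crossings of the base response.
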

\par With the aforementioned lemmas in hand, we give the following two theorems on the designed DED scheme, where one gives the regularity condition and the other analyzes the consensus condition.
\begin{theorem}
	\label{Theorem 1}
	Under Assumption \ref{assum 1}, the ED scheme (\ref{9a}) and (\ref{9b}) with the reset controller (\ref{10}) is regular if $k>0$ and
	\begin{equation}
		4h>\eta_n(H)k^2 \label{12}.
	\end{equation}
\end{theorem}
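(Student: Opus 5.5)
Following Definition \ref{Definition 1}, regularity has two parts: every eigenvalue of the base matrix $\Phi$ in (\ref{11a}) must have negative real part, and the reset mechanism (\ref{10}) must actually fire, i.e., some $T_i$ is nonempty with a finite reset instant. By Lemma \ref{Lemma 3}, the second part follows once all nonzero eigenvalues of $\Phi$ are complex. So the plan is to compute the spectrum of $\Phi$ explicitly from the spectrum of $H$ and check both properties under $k>0$ and (\ref{12}).

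The first step is to block-diagonalize $\Phi$. Under Assumption \ref{assum 1}, $L$ is symmetric and $B$ is diagonal, so $H$ is symmetric. By the lemma on $H=L+B$ it is positive definite, so $H=Q\,\mathrm{diag}\{\eta_1,\dots,\eta_n\}Q^T$ with $Q$ orthogonal and all $\eta_j>0$. Applying the similarity transformation $\mathrm{diag}\{Q,Q\}$ to $\Phi$, and then permuting coordinates, decouples $\Phi$ into $n$ blocks of size $2\times 2$:
\[
\Phi_j=\left[\begin{matrix} -k\eta_j & -h\eta_j\\ 1 & 0\end{matrix}\right],\quad j=1,\dots,n .
\]
The characteristic polynomial of each block is $\mu^2+k\eta_j\mu+h\eta_j=0$, so the eigenvalues of $\Phi$ are
\[
\mu_{j}^{\pm}=\frac{-k\eta_j\pm\sqrt{k^2\eta_j^2-4h\eta_j}}{2}.
\]

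Next, verify the two properties using this formula. Since $k>0$, $h>0$ and $\eta_j>0$, the quadratic has positive coefficients, so both roots have negative real part (Routh–Hurwitz for degree two). In particular, $0$ is not an eigenvalue, which means "all nonzero eigenvalues" in Lemma \ref{Lemma 3} is simply all eigenvalues. The discriminant $\eta_j(k^2\eta_j-4h)$ is negative for every $j$ exactly when $4h>k^2\eta_j$ for all $j$. Because $\eta_j\le\eta_n(H)=\eta_M$, this is equivalent to (\ref{12}). Hence every eigenvalue is complex with real part $-k\eta_j/2<0$.

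Finally, conclude. With every eigenvalue complex, each decoupled mode is an underdamped oscillation, so $\xi$ is a sum of decaying oscillating modes. By Lemma \ref{Lemma 3}, some $\xi_i$ then crosses zero in finite time, which means some $\nu_i$ is reset and $T_i\neq\emptyset$ with $t_l^i<\infty$. Together with the Hurwitz property, Definition \ref{Definition 1} gives regularity. The main obstacle is this last step: Lemma \ref{Lemma 3} is stated loosely, and a linear combination of complex modes does not obviously force a zero crossing of some individual $\xi_i$ (it fails, for example, if $\xi(0)=0$). I would rely on Lemma \ref{Lemma 3} as given, and if needed add a short remark. For nonzero initial $\xi$, the vector $Q^T\xi(t)$ has some nonzero component of the form $e^{-k\eta_jt/2}(a\cos\omega_jt+b\sin\omega_jt)$. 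That component changes sign infinitely often, so $\xi$ cannot stay in a single closed orthant forever, and hence some component $\xi_i$ must reach zero in finite time.
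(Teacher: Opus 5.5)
Your main argument is the paper's own proof. You reduce $\Phi$ to the $2\times 2$ blocks determined by the eigenvalues of $H$, read off $\mu=(-k\eta_j\pm\sqrt{k^2\eta_j^2-4h\eta_j})/2$, note that (\ref{12}) makes every discriminant negative, and then invoke Lemma~\ref{Lemma 3} and Definition~\ref{Definition 1}. You are in fact more explicit than the paper about the Hurwitz property and about why $\eta_j\le\eta_n(H)$ makes (\ref{12}) the right condition. Relying on Lemma~\ref{Lemma 3} as given is exactly what the paper does.

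The fallback argument in your last paragraph, however, has a wrong step. A sign change of one modal coordinate $(Q^T\xi)_j=q_j^T\xi$ does not force $\xi$ to leave an orthant. Since $H$ is an irreducible nonsingular M-matrix, only the eigenvector $q_1$ of $\eta_1$ is sign-definite. Every $q_j$ with $j\ge 2$ is orthogonal to $q_1$, so it has entries of both signs, and $q_j^T\xi$ can oscillate while $\xi$ stays in the open positive orthant. Even a sign change of $q_1^T\xi$ only rules out the all-positive and all-negative orthants.

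The conclusion is still true, and there is a cleaner route that does not need complex eigenvalues. Suppose no reset ever occurred. Then the trajectory is that of the base system, so $\nu_i(t)=\int_0^t\xi_i(\tau)\,d\tau$ (because $\nu_i(0)=0$), and $\nu(t)\to 0$ because $\Phi$ is Hurwitz. Some $i$ has $\xi_i(0)\neq 0$ whenever $\lambda(0)\neq\rho\mathbf{1}$, since $\xi=H(\lambda-\rho\mathbf{1})$ and $H$ is nonsingular. That $\xi_i$ would never vanish, so it would keep a strict sign. Then $\nu_i$ would be strictly monotone away from $0$, which contradicts $\nu_i(t)\to 0$. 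This argument uses only $k,h>0$, not (\ref{12}).

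If you prefer to stay with modes, use the slowest excited ones. They share a single frequency $\omega$, so $e^{\sigma t}\xi(t)=v\cos\omega t+w\sin\omega t+o(1)$ with $(v,w)\neq(0,0)$. Because the leading term changes sign after every half-period $\pi/\omega$, some $\xi_i$ must take both signs.
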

\begin{proof}
	The characteristic polynomial of $\Phi$ is
	$$\lvert \mu I-\Phi\rvert=\prod_{i=1}^n {\lvert\mu_i^2+k\eta_i(H)\mu_i+h\mu_i(H)\rvert},$$
	which gives $n$ nonzero eigenvalues with negative real parts as follow
	$$\mu_i=\frac{-k\eta_i(H)\pm \sqrt{k^2(\eta_i(H))^2-4h(\eta_i(H))}}{2}.$$
	Thus, according to Lemma \ref{Lemma 3}, only if $4h>\eta_n(H)k^2$, all eigenvalues of $\Phi$ are complex. Thus, zero crossings of $\xi$ occur at least once. According to Definition \ref{Definition 1}, the designed DED scheme (\ref{9a}) and (\ref{9b}) with (\ref{10}) is regular, i.e., Theorem \ref{Theorem 1} holds.
\end{proof}
\begin{theorem}
	\label{Theorem 2}
	Under Assumption \ref{assum 1}, the DED scheme (\ref{9a}) and (\ref{9b}) with the reset controller (\ref{10}) can achieve consensus with at least one reset if $k>0$ and $4h>\eta_n(H)k^2$.
\end{theorem}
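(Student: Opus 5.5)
We prove consensus with a Lyapunov function that does not increase at flow times or at reset times. We then use Theorem \ref{Theorem 1} to guarantee that at least one reset actually occurs. Throughout we treat $\rho$ as piecewise constant (time-phased), so on each price interval $\dot\xi = H\dot\lambda = Hu = -kH\xi - hH\nu$, which is exactly (\ref{11a}). Consensus $\lambda_i\to\rho$ is then equivalent to $\xi\to 0$, because $H$ is nonsingular by the lemma on $H=L+B$.

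\textbf{Choice of Lyapunov function.} We take
\begin{equation*}
V(x)=\tfrac12\,\xi^T H^{-1}\xi+\tfrac{h}{2}\,\nu^T\nu ,
\end{equation*}
which is positive definite since $H^{-1}\succ 0$. Along (\ref{11a}) we get
\begin{equation*}
\dot V=\xi^T H^{-1}(-kH\xi-hH\nu)+h\nu^T\xi=-k\,\xi^T\xi\le 0 .
\end{equation*}
At a reset instant given by (\ref{11b}), $\xi$ is unchanged and the components of $\nu$ are set to zero, so $V(x(t^+))\le V(x(t))$. Hence $V$ is nonincreasing along every hybrid trajectory, and $x$ stays bounded.

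\textbf{Convergence.} We integrate the flow inequality across resets to obtain $\int_0^\infty k\|\xi\|^2\,dt\le V(x(0))<\infty$. Since $x$ is bounded, $\dot\xi$ is bounded between resets. Resets do not move $\xi$, so $\xi$ is uniformly continuous. Barbalat's lemma then gives $\xi(t)\to0$, and therefore $\lambda(t)\to\rho\mathbf 1$, which is the consensus claim.

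It remains to show that $\nu\to 0$, or at least that it does not obstruct the conclusion. For this we apply an invariance-type argument. On the set $\{\xi\equiv 0\}$, the flow gives $\dot\xi=-hH\nu=0$, which forces $\nu=0$ because $H$ is nonsingular. Alternatively, each $\xi_i=0$ triggers a reset of $\nu_i$ to $0$ in (\ref{10}).

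\textbf{At least one reset.} By Theorem \ref{Theorem 1}, the condition $4h>\eta_n(H)k^2$ makes every eigenvalue of $\Phi$ complex with negative real part. By Lemma \ref{Lemma 3}, the trajectory of the base system then oscillates, so some $\xi_i$ crosses zero in finite time. The reset mechanism is therefore activated at least once, which gives the statement.

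\textbf{Main obstacle.} The hardest step is the hybrid nature of the system: resets may accumulate (Zeno behavior), and then the continuity argument behind Barbalat's lemma needs care. We would handle this in one of two ways. The first is to impose a dwell-time $\Delta$ between resets, which the paper anticipates. The second is to note that $V$ is nonincreasing regardless of when resets occur, and to work with the integral bound directly on the flow intervals.
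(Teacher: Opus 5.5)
Your proof is correct, and it establishes consensus by a genuinely different route from the paper's.

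The paper works only with the error $e=\lambda-\rho\mathbf{1}$ and the nonsmooth function $V=\max_i e_i^2$. Its key ingredient is the sign alignment created by the Clegg reset, $\nu_i=r_i\xi_i$ with $r_i\ge 0$. This turns the protocol into a proportional law $u_i=-(k+hr_i)\xi_i$ with gain at least $k$. A diagonal-dominance estimate at the maximizing index then gives ${\cal L}_uV\le 0$, and LaSalle is invoked.

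You instead take the energy $\frac12\xi^TH^{-1}\xi+\frac h2\nu^T\nu=\frac12 e^THe+\frac h2\nu^T\nu$ of the base PI loop. Its flow derivative is $-k\xi^T\xi$ irrespective of resets, zeroing components of $\nu$ cannot increase it, and you conclude with Barbalat and the invertibility of $H$.

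What your route buys:
\begin{itemize}
\item Because you never use $r_i\ge 0$, the argument is insensitive to when, or whether, resets occur. The same lines give consensus for the non-reset PI law and for the dwell-time law (\ref{19}). There sign alignment is lost (Case 6 reports negative $r_i$), so the max-type estimate no longer applies.
\item Barbalat spares you the paper's appeal to LaSalle. That appeal is made for a system whose gain $k+hr_i(t)$ depends on the past trajectory, i.e.\ a non-autonomous one, where the standard principle does not directly apply.
\end{itemize}
What the paper's route buys in exchange is that $\max_i|\lambda_i-\rho|$ is nonincreasing. This is the invariant-sublevel-set, no-overshoot property used in Remark \ref{remark 3} to separate the reset scheme from plain PI. Your energy decreases the same way with or without resets, so it cannot detect that advantage.

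A few minor remarks.
\begin{itemize}
\item Of your two remedies for Zeno behaviour, the dwell time replaces (\ref{10}) by (\ref{19}) and therefore proves a different statement. The second remedy is the right one: your estimate only needs solutions on $[0,\infty)$. Continuation through an accumulation point of reset instants can be justified because $\lambda$ is Lipschitz and $|\nu_i(t)|\le\int_{t_l^i}^{t}|\xi_i|\,d\tau$ tends to $0$ there.
\item The paragraph on $\nu\to 0$ is not needed, since $\xi\to 0$ already gives $\lambda\to\rho\mathbf{1}$. As written it is an informal invariance argument for a hybrid system, so you can drop it.
\item For the at-least-one-reset part you use Lemma \ref{Lemma 3} as a sufficient condition, although it is stated as an ``only if''. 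The paper's proof of Theorem \ref{Theorem 1} makes the same move. An elementary replacement: until the first reset the trajectory is that of the base system, so $\nu_i(t)=\int_0^t\xi_i\,d\tau\to 0$ because $\Phi$ is Hurwitz. Hence any $\xi_i$ with $\xi_i(0)\neq 0$ cannot keep its sign and must vanish in finite time. This uses only $k,h>0$ and $\lambda(0)\neq\rho\mathbf{1}$, not $4h>\eta_n(H)k^2$.
\end{itemize}
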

\begin{proof}
	It can be observed that $\nu$ is discontinuous in time. Hence, the Filippov solution to (\ref{7}) under (\ref{9a}) and (\ref{9b}) with (\ref{10}) is concerned here.
	\par Define the state error $e_i=\lambda_i-\rho$ and $e=[e_1, \cdots, e_n]^T$. Select a candidate Lyapunov function
	\begin{equation}
		\label{13}
		V=\mathop{max}\limits_{i\in N_i}(e^2_i).
	\end{equation}
	It can be easily concluded that $V\ge 0$ with equality if and only if $e_1=\cdots=e_n=0$. 
	\par Define a set-valued Lie derivative as 
	$${\cal L}_u V=\partial V^T u,$$
	where $\partial V$ is the generalized gradient of $V$ with respective to $e$ and 
	$$\partial V=[\zeta_1 e_1,\cdots,\zeta_n e_n]^T,$$
	where $\zeta=[\zeta_1,\cdots,\zeta_n]^T$, $\zeta_i=2$ if $i=\mathop{argmax}\limits_{j\in [1,\cdots,n]}(e^2_j)$ and $\zeta_i=0$ otherwise. Thus, 
	$${\cal L}_u V=\sum_{i=1}^{n}(\zeta_i e_i u_i).$$
	Denote ${\cal H}_{ij}$ as ${\cal H}=[{\cal H}_{ij}]_{n\times n}$. Hence, $\xi_i=\sum_{j=1}^{n} {\cal H}_{ij} e_j$ and $\nu_i=r_i\xi_i$. Thus, one can get 
	\begin{equation}
		\label{14}
		{\cal L}_u V=-\sum_{i=1}^{n}((k+hr_i)\zeta_i\sum_{j=1}^{n}{\cal H}_{ij} e_i e_j).
	\end{equation}
	Note that
	\begin{equation}
		\label{15}
		\begin{array}{l}
			\zeta_i\sum\limits_{j = 1}^n {{{\cal H}_{ij}}} {e_i}{e_j} = {{\cal H}_{ii}}e_i^2 + \sum\limits_{j = 1,j \ne i}^n {{{\cal H}_{ij}}} {e_i}{e_j}\\
			\qquad \qquad \qquad \ge \zeta_i {{\cal H}_{ii}}e_i^2+\zeta_i\sum\limits_{j = 1,j \ne i}^n {\cal H}_{ij}(\frac{1}{2}(e_i^2 + e_j^2))\\
			\qquad \qquad \qquad \ge -\frac{1}{2}\zeta_i(\sum\limits_{j = 1,j\ne i}^n{{\cal H}_{ij}(e_i^2-e_j^2)})\ge 0.
		\end{array}.
	\end{equation}
	Thus,
	\begin{equation}
		\label{16}
		{\cal L}_u V\le\sum_{i=1}^{n}{({\frac{1}{2}}(k+hr_i)\zeta_i(\sum\limits_{j = 1,j\ne i}^n{{\cal H}_{ij}(e_i^2-e_j^2)}))}\le 0,
	\end{equation}
	only if $k,h>0$. According to LaSalle Invariance Principle, all components of $e$ converge to the largest invariant set
	\begin{equation}
		\label{17}
		\{e|{\cal L}_u V=0\},
	\end{equation} 
	i.e., $e_1=\cdots=e_n=0$ as $t\rightarrow \infty$. As a result, $\lambda_1=\cdots=\lambda_n=\rho$ as $t\rightarrow \infty$. With $k>0$ and $4h>\eta_n(H)k^2$, consensus and regularity can be guaranteed. That is, Theorem \ref{Theorem 3} is proved.
\end{proof}
\begin{remark}
	\label{remark 1}
	Under (\ref{9a}) and (\ref{9b}), MCs convergence can be accelerated by using a large $k$. However, it is unwise to use excessive $k$ because it will cause an excessive overshoot. Under the designed control scheme, the integral term is reset to 0 when the proportional term encounters zero crossing so that the signs of the two are aligned. Consequently, the overshoot is avoided. Thus, under this control scheme, a slightly larger $k$ is allowed to speed up MC convergence. Besides, it can be concluded from Theorem \ref{Theorem 2} that the reset mechanism behaves no threat to consensus only if $k,h>0$, and parameter condition for the regular DED scheme (\ref{9a}) and (\ref{9b}) with (\ref{10}) depends on the communication graph.
\end{remark}
\subsection{Further discussion on the designed schemes}
Under the DED scheme (\ref{9a}) and (\ref{9b}) with the reset mechanism (\ref{10}), each integral term is reset to zero when its corresponding proportional item is about to change its sign. Thus, this reset control system possesses a better transient performance, such as a shorter settling time and no overshoot, than both a PC $\dot \lambda_i=-k\xi_i$ and a PI protocol (\ref{9a}) and (\ref{9b}).
\par Compared to a PC, an additional dynamic on $\nu$ is introduced in a reset controller, resulting in the order growth of the closed-loop system. Actually, (\ref{9a}) with (\ref{10}) can be reduced as follow
\begin{equation}
	\left\{\begin{array}{l}
		\nu _i = r_i\xi _i,\\
		r_i=\frac{\int_{t_l^i}^{t}\xi_i(\tau) d\tau}{\xi _i}.
	\end{array} \right.\label{18}
\end{equation}
Thus, the dynamic consensus protocol (\ref{9a}) and (\ref{9b}) with (\ref{10}) are transformed into a PC with dynamic gain $(k+hr_i)$. 
This design can bring better transient performance than a static protocol with a fixed gain. Further, we give another simplified method. If the reset interval is small enough, (\ref{18}) is approximately equivalent to $r_i=t-t_l^i$. 
\begin{remark}
	\label{remark 3}
	It is precisely because $\xi_i$ and $\nu_i$ have the same sign that $r_i$ is always greater than 0, i.e., $k<(k+hr_i)$. Thus, compared to a PC with a fixed gain $k$ \cite{yuFrequencySynchronizationPower2021b, chenDistributedCooperativeControl2021}, $e$ will converge faster to 0 under the designed DED scheme. Besides, from (\ref{16}), one can get the set $\left\{e|V\le V(e(0))\right\}$ is compact and strongly invariant under the designed DED scheme, which is, however, not a invariant set for a non-reset PI control system. This is because the sign of $\nu_i$ is not always aligned with $\xi_i$. Hence, $\Vert V \Vert_\infty\le \Vert V_d \Vert_\infty$, where $V_d=\mathop{max}\limits_{i\in N_i}(e^2_i)$ is a candidate Lyapunov function for a non-reset PI control system. 
\end{remark}
\begin{remark}
	\label{remark 6}
	For a DED scheme, the finite-time and fixed-time methods are a common scheme \cite{xuConsensusActivePower2021,zaeryNovelFullyDistributed2021}. However, chattering often occurs in the simulation of these controllers by using symbolic functions. Besides, it is difficult to deal with fractional item in practical application. The reset controller is an improved PI controller, which resets the integral term when the proportional item is 0. This not only avoids the chattering phenomenon, but also reduces the control input and accelerates the convergence of the system. At the same time, the  reset controller is relatively simple and easy to realize physically.
\end{remark} 
\par To prevent Zeno behavior of the designed algorithm, a constant dwell-time $\Delta$ is given here. And a theorem on the stability of (\ref{11a}) and (\ref{11b}) without Zeno behavior is given in Theorem \ref{Theorem 3}.
\begin{theorem}
	\label{Theorem 3}
	For the designed DED scheme (\ref{9a}) and (\ref{9b}) with (\ref{10}), if the reset mechanism is modified as (\ref{19})
	\begin{equation}
		\label{19}
		\nu_i(t^+)=0,\;if\;\xi_i=0\;and\;t-t_l^i\ge l\Delta,
	\end{equation}
	where $\Delta$ is an arbitrary positive constant, the designed dynamic DED scheme with reset mechanism possesses the asymptotical stability without Zeno behavior.
\end{theorem}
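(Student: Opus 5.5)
The plan has two parts: exclude Zeno behavior using the dwell-time condition in (\ref{19}), then show asymptotic stability of the hybrid system (\ref{11a})--(\ref{11b}) with the modified reset rule.

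\textbf{Zeno-freeness.} Under (\ref{19}), agent $i$ may reset at a zero crossing of $\xi_i$ only if $t-t_l^i\ge l\Delta$; in particular consecutive reset instants of agent $i$ are separated by at least $\Delta>0$. Hence on any compact interval $[0,T]$ agent $i$ has at most $\lfloor T/\Delta\rfloor+1$ resets. Since there are finitely many agents, the union $\bigcup_{i\in\cal V}T_i$ has finitely many elements in $[0,T]$. So no accumulation point of reset instants exists and Zeno behavior is excluded. Between resets the flow is the linear system (\ref{11a}), so solutions are defined for all $t\ge 0$ by concatenation.

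\textbf{Asymptotic stability.} I would reuse the max-type function $V=\max_i e_i^2$ from the proof of Theorem \ref{Theorem 2}, together with the reduced representation (\ref{18}) $\nu_i=r_i\xi_i$.

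The key point to verify is that $r_i\ge 0$ still holds under (\ref{19}). Between resets, $\nu_i(t)=\nu_i(t_l^i{}^+)+\int_{t_l^i}^t\xi_i\,d\tau$. If a zero crossing of $\xi_i$ occurs inside the dwell window and is not followed by a reset, the signs of $\nu_i$ and $\xi_i$ may temporarily disagree. On such windows I would bound the effect instead of assuming alignment:
\begin{itemize}
\item on intervals where signs agree, (\ref{16}) gives ${\cal L}_uV\le 0$;
\item on a disagreeing interval of length at most $\Delta$, the flow is the linear PI system (\ref{11a}), which is Hurwitz under the conditions of Theorem \ref{Theorem 1}, so $V$ can grow by at most a bounded factor $c(\Delta)$.
\end{itemize}

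A cleaner alternative I would try first is a quadratic Lyapunov function $W=x^TPx$ for the base system, with $P\succ0$ solving $\Phi^TP+P\Phi=-Q$. I would choose $P$ block diagonal in a suitable coordinate, so that the reset map $\Phi_0$, which zeroes $\nu$, satisfies $W(\Phi_0x)\le W(x)$. Then $W$ decreases along flows and does not increase at jumps. Stability then follows from the standard reset-system result (the $H_\beta$-type condition in \cite{banosResetControlSystems2011}), and the finite reset count on bounded intervals from the first part removes any issue with the hybrid time domain. Concretely:
\begin{itemize}
\item Diagonalize $H=U\,\mathrm{diag}(\eta_i)U^T$ to decouple $\Phi$ into $2\times2$ blocks $\begin{bmatrix}-k\eta_i&-h\eta_i\\1&0\end{bmatrix}$.
\item For each block, take $P_i=\mathrm{diag}(1,h\eta_i)$, which gives $\dot W_i=-2k\eta_i\tilde\xi_i^2\le0$.
\item Zeroing $\nu$ then only removes the nonnegative term $h\eta_i\tilde\nu_i^2$.
\end{itemize}

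The main obstacle is that a reset zeroes $\nu$ in the original coordinates, not in the modal coordinates $U^T\nu$. I would therefore write $W=\xi^TH^{-1}\xi+h\,\nu^T\nu$ directly. Its flow derivative is $-2k\,\xi^T\xi\le0$, using $\dot\xi=H\dot e$, and resetting any subset of the $\nu_i$ to zero cannot increase $h\|\nu\|^2$.

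Finally, LaSalle's principle for hybrid systems gives convergence to the largest invariant set in $\{\xi=0\}$. On that set $\dot\xi=-hH\nu=0$ forces $\nu=0$, hence $e=H^{-1}\xi=0$, i.e., $\lambda_i\to\rho$.
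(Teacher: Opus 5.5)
Your proof is correct, but it takes a genuinely different route from the paper. The paper argues by norm contraction. It writes the Hurwitz matrix $\Phi$ in Jordan form and uses the linearly growing dwell time $l\Delta$ to make the flow intervals between resets eventually long enough that the base-system transition matrix is non-expansive in the $1$-norm, which beats the transient growth of $e^{\Phi t}$. It then combines this with $\|\Phi_0\|_1\le 1$ at the resets. In that argument the dwell time does double duty: it excludes Zeno behavior, and it supplies the long flow intervals the contraction needs.

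You instead exhibit a common Lyapunov function $W=\xi^TH^{-1}\xi+h\nu^T\nu=e^THe+h\|\nu\|^2$. Its flow derivative is $-2k\|\xi\|^2$, because the cross terms cancel when $H$ is symmetric positive definite. It cannot increase when any subset of the $\nu_i$ is zeroed, since $\xi=He$ is continuous across resets. This certificate is monotone at all times, regardless of how long the flow intervals are or whether agents reset synchronously. In your proof the dwell time is therefore used only for Zeno-freeness and completeness of solutions. Your argument also shows that the resets can never destabilize the PI law. It avoids what the contraction argument leaves implicit: a uniform contraction factor, the case of finitely many resets, and asynchronous agent-wise reset instants. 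What the paper's route buys in return is that it uses only Hurwitzness of $\Phi$ and non-expansiveness of the reset map, not the symmetry of $H$ on which your cancellation depends.

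Two small remarks. First, your initial sketch with the max-type $V$ and a growth factor $c(\Delta)$ does not close by itself and can be dropped. Second, you can finish without the hybrid invariance principle, which would require encoding the dwell-time rule with timers and counters that are unbounded here. The elementary route is as follows. From $\int_0^\infty\|\xi\|^2\,dt\le W(0)/(2k)$ and boundedness of $\dot\xi=-kH\xi-hH\nu$, Barbalat's lemma gives $\xi\to0$. Hence $e=H^{-1}\xi\to0$, i.e., $\lambda_i\to\rho$. Moreover $\|\nu\|^2=(W-\xi^TH^{-1}\xi)/h$ converges, so the jumps at resets vanish asymptotically and $\nu$ becomes asymptotically constant on unit time windows. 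Integrating $\dot\xi$ over such windows then yields $H\nu\to0$, hence $\nu\to0$.
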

\begin{proof}
	From Theorem \ref{Theorem 1} and Theorem \ref{Theorem 2}, the solution of the base system (\ref{11a}) is $x=exp({\Phi_0(t-t_l)})x(t_l^+)$, where $t_l<t\le t_{l+1}$ and $\Phi$ in (\ref{11a}) is Hurwitz matrix. Thus, there exist an invertible matrix $W$ so that $\Phi=W^{-1}JW$, where $J$ is a Jordan matrix and each Jordan block is Hurwitz. Then 
	$$\begin{array}{l}
		x({t_{l + 1}}) = {W^{ - 1}}exp(J({t_{l + 1}} - {t_l}))Wx(t_l^ + )\\
		\qquad \quad \,\,\,= {W^{ - 1}}exp(kJ\Delta )Wx(t_l^ + ).
	\end{array}$$
	Take 1-norm on both sides of the above equation, 
	$${\Vert x(t_{l+1})\Vert}_1 \le {\Vert W^{-1}\Vert}_1 {\Vert exp({lJ\Delta})\Vert}_1 {\Vert W \Vert}_1 {\Vert x(t_l^+)\Vert}_1.$$
	There always exists a sufficiently large constant $k^*$ so that if $k\ge k^*$, ${\Vert W^{-1}\Vert}_1 {\Vert exp({kJ\Delta})\Vert}_1 {\Vert W \Vert}_1\le 1$. Thus, one can get
	$${\Vert x(t_{l+1})\Vert}_1<{\Vert x(t_{l}^+)\Vert}_1.$$
	\par At reset instant, there is
	$$x(t_{l+1}^+)=\Phi_0 x(t_{l+1}).$$
    Also, take 1-norm on both sides of the above equation,
    $${\Vert x(t_{l+1}^+)\Vert}_1\le {\Vert x(t_{l+1})\Vert}_1.$$
     Thus, $
    \lim\limits_{t\to+\infty} {\Vert x\Vert}_1=0$. Combined with the fact that ${\Vert x\Vert}_2\le {\Vert x\Vert}_1$, one can get
    $$\lim\limits_{t\to+\infty}{\Vert x\Vert}_2= 0.$$
    So far, Theorem \ref{Theorem 3} has been proved.
\end{proof}
\begin{remark}
	\label{remark 5}
	From the above proof, we can see that for the ED scheme with finite number of reset instants, the stability of MCs mainly depend on the dynamics of non-reset times. However, excessive a dwell time $\Delta$ may lead to deterioration of system dynamic performance. Because the reset times are artificially reduced, the integral function in PI controller will be strengthened after zero crossing.
\end{remark}
\section{Case study}
Consider a MG containing $4$ BESSs and an ER governed by a MAS. Single line diagram of the 4 BESSs with an ER and their equipped communication topology of MASs are shown in Fig.~\ref{fig3}. Therein, electricity price as the leader is provided by an ER, and only agent 1 can obtain the leader's information. Accordingly, $\eta_n(H)=4.7913$. 7 cases are arranged here to verify the analysis in Section III.
\begin{figure}
	\centering
	\includegraphics[width=8cm]{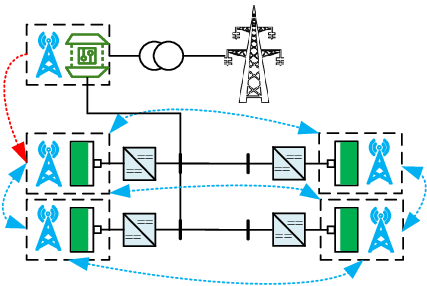} \caption{Single line diagram of a DC microgrid containing $4$ BESSs and an ER with a communication graph.\label{fig3}}
\end{figure}
\par In Case 1, the effectiveness of the proposed distributed reset control scheme, i.e., Theorem \ref{Theorem 1} and \ref{Theorem 2}, is verified under $k=1$ and $h=2$. At this point, the inequality (\ref{12}) holds.
\par In Case 2, the progressiveness of the DED algorithm with reset mechanism is verified under $k=1$ and $h=2$, as shown in Remark \ref{remark 3} and \ref{remark 6}.
\par In Case 3, the influence of parameters $k$ and $h$ on MCs is discussed.
\par In Case 4, the plug-and-play test on the designed algorithm is performed under the conditions $k=1$ and $h=2$.
\par In Case 5, the performance of the designed algorithm is tested in the time-phased electricity price environment.
\par In Case 6, the DED algorithm under a Zone-free condition with reset mechanism is verified with $\Delta=1$ and $\Delta=1.5$, as shown in Theorem \ref{Theorem 3} and Remark \ref{remark 5}.
\par In Case 7, the number of BESSs and the agents that manage them is increased to 12 such that the scalability applicable to a large-scale power system can be verified.
\subsection{Case 1: Validation of the Proposed DED algorithm}
\begin{figure}
	\centering
	\includegraphics[width=8cm]{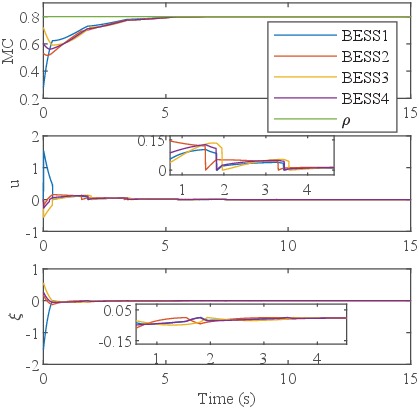} \caption{The evolution of MCs, $u_i$ and $\xi_i$ under the designed DED algorithm in Case 1.\label{figc1}}
\end{figure}
\begin{figure}
	\centering
	\includegraphics[width=8cm]{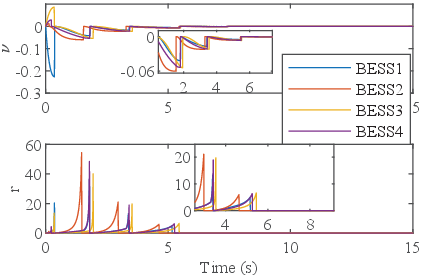} \caption{The evolution of $\nu_i$ and $r_i$ under the designed DED algorithm in Case 1.\label{figc1c}}
\end{figure}
\par Fig. \ref{figc1} and \ref{figc1c} are the simulation results under the designed DED algorithm, where contains the evolution of MCs, control input vector $u$, the state difference vector $\xi$, integral term $\nu$ and the dynamic gain vector $r$ in (\ref{18}). In this scenario, constant electricity price $\rho=0.8$ is adopted.
\par As can be seen from Fig. \ref{figc1}, MCs can converge to constant electricity price $\rho$. Meanwhile, control input $u$ and state difference $\xi$ converge to 0. From Fig. \ref{figc1c}, integral term $\nu$ and dynamic gain $r$ are reset several times when the state difference occurs zero crossings and finally converges to 0. Besides, the dynamic gains are always guaranteed to be non negative, which also indicates that the integral and proportional terms are always sign aligned. Therefore, the algorithm designed is effective.
\subsection{Case 2: Discussion on the progressiveness}
\begin{figure}
	\centering
	\includegraphics[width=8cm]{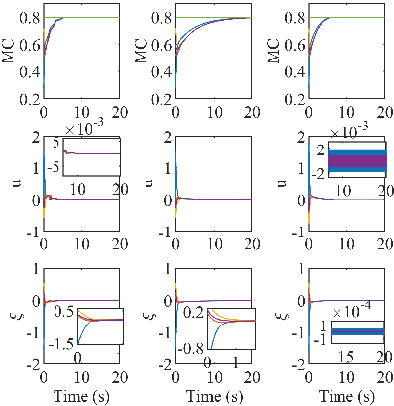} \caption{Simulation results under the designed controller, a PC and a finite-time controller, which are listed in columns from left to right in Case 2.\label{figc2}}
\end{figure}
\par To illustrate progressiveness, comparison of simulation results of three DED algorithms, i.e., the designed controller, a PC and a finite-time controller, is shown in Fig. \ref{figc2}.
\par It is not difficult to see that MCs is driven to the electric price $\rho$ faster and without overshoot by the designed algorithm, comparing the subgraphs of the first two columns. Although the convergence rate of MCs can be accelerated under a finite-time controller, the chattering of control input and MCs caused by the fractional integral term is unavoidable, as shown in subfigures of the third column. In addition, at present, it is physically difficult to achieve fractional integration in the project. The controller designed in this paper is developed from PI controller, which is widely used in industry. Therefore, it is reasonable to believe that the controller designed is more convenient for engineering application.
\subsection{Case 3: Algorithm Test on different parameters}
\begin{figure}
	\centering
	\includegraphics[width=8cm]{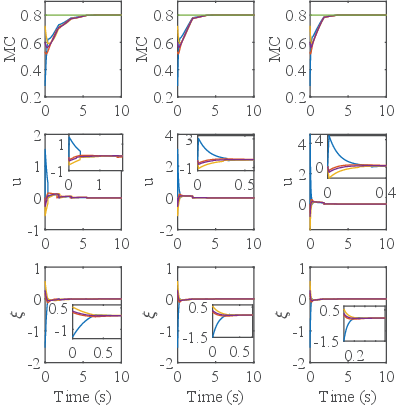} \caption{Simulation results on MCs, $u_i$ and $\xi_i$ with $k$=1, 5, 10 and $h=2$, which are listed in columns from left to right in Case 3.\label{fig3k} }
\end{figure}
\begin{figure}
	\centering
	\includegraphics[width=8cm]{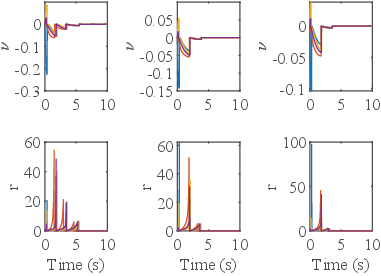} \caption{Simulation results on $\nu_i$ and $r_i$ with $k$=1, 5, 10 and $h=2$, which are listed in columns from left to right in Case 3.\label{fig3ck}}
\end{figure}
\begin{figure}
	\centering
	\includegraphics[width=8cm]{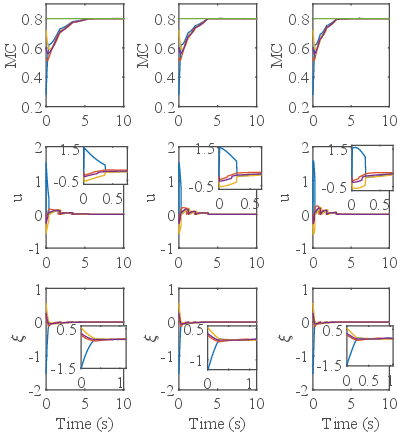} \caption{Simulation results on MCs, $u_i$ and $\xi_i$ with $k=1$, $h$=2, 6 and 10, which are listed in columns from left to right in Case 3.\label{fig31h}}
\end{figure}
\begin{figure}
	\centering
	\includegraphics[width=8cm]{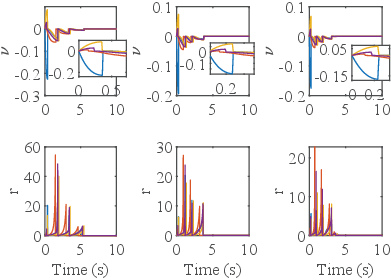} \caption{Simulation results on $\nu_i$ and $r_i$ with $k=1$, $h$=2, 6 and 10, which are listed in columns from left to right in Case 3.\label{fig31ch}}
\end{figure}
\par In Remark \ref{remark 1}, the impact of parameters on controller performance has been analyzed. Here, we use the trial and error method to visually illustrate these results. Fig. \ref{fig3k}-\ref{fig31ch} are the simulation results under several sets of parameters. In Fig. \ref{fig3k} and \ref{fig3ck}, the subgraphs in each column from left to right are the simulation results when $k$=1, 5, 10 and $h=2$. In Fig. \ref{fig31h} and \ref{fig31ch}, the subgraphs in each column from left to right are the simulation results when $k=1$, $h$=2, 6 and 10.
\par As $k$ and $h$ increase in Fig. \ref{fig3k} and \ref{fig31h}, MCs converge to the electricity price $\rho$ faster. At the same time, the control input $u_i$ and state difference $\xi_i$ converge to 0 faster. However, with large enough $k$, (\ref{12}) no longer holds, so the designed algorithm is no longer regular. Therefore, it can be seen from subgraphs in the last two columns that the control input $u_i$ has no obvious reset phenomenon. With the increase of $k$ and $h$, the last reset time is shortened as shown in Fig. \ref{fig3ck} and \ref{fig31ch}. Besides, too large $h$ increases the control input of BESS $1$. Fortunately, the increase is not large, and the increase of $h$ will not break the regularity condition. 
\subsection{Case 4: The play-and-plug function}
\begin{figure}
	\centering
	\includegraphics[width=8cm]{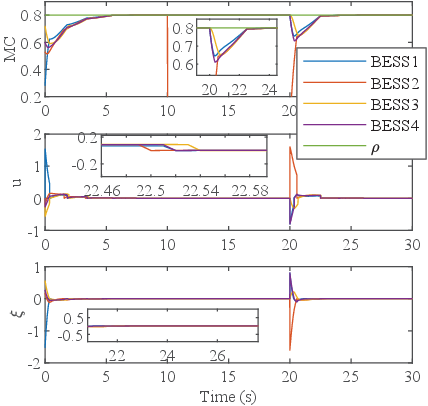} \caption{The evolution of MCs, $u_i$ and $\xi_i$ in Plug-and-Play function test in Case 4.\label{figc4}}
\end{figure}
\begin{figure}
	\centering
	\includegraphics[width=8cm]{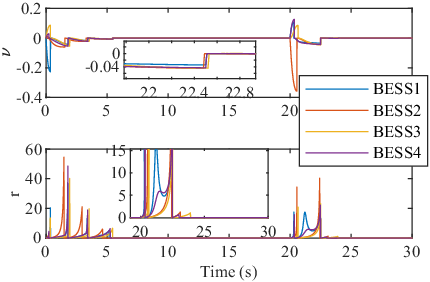} \caption{The evolution of $\nu_i$ and $r_i$ in Plug-and-Play function test in Case 4.\label{figc4c}}
\end{figure}
\par To test the play-and-plug function, BESS 2 is isolated from others at $t=10s$ and reconnected with others at $t=20s$. The test results under a constant electric price are shown in Fig. \ref{figc4}.
\par From the simulation results, when the battery is removed, MCs of other units are obvious affected. At the time of connection restoration, MCs of other units encounters varying degrees of changes, but they can eventually converge to the electricity price.
\subsection{Case 5: Test in time-phased electricity price environment}
\par Time phased electricity pricing is an effective means for EC to guide users in electricity consumption. It can increase electricity prices during peak electricity consumption periods, allowing users to minimize the absorption of power from the power company and even achieve feedback. This reduces the power supply pressure on the power company. Thus, in the time-phased electricity price environment, the designed DED algorithm is verified, and the simulation results are shown in Fig. \ref{figc5}, where the time-phased electricity price scheme is that $\rho=0.8(0\le t\le 15)$, $\rho=1(15\le t\le 35)$ and $\rho=0.6(35\le t\le 50)$. 
\par From the simulation results, it can be seen that, at the instant of electricity price switching, the MC of BESS 1 encounters a larger fluctuation. Nevertheless, MC is still able to converge to the time slot electricity price. Besides, regularity and consensus, as previously analyzed, can be well guaranteed.
\begin{figure}
	\centering
	\includegraphics[width=8cm]{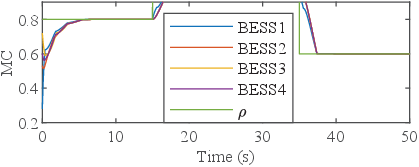} \caption{The evolution of MC under time-phased electricity price in Case 5.\label{figc5}}
\end{figure}
\subsection{Case 6: Test on a Zone-free distributed scheme}
\par In order to verify Theorem \ref{Theorem 3}, the Zone-free distributed scheme with (\ref{19}) is test here, where $\Delta=1$ and $\Delta=1.5$ are introduced into the reset mechanism respectively for comparison. And the simulation results are shown in Fig. \ref{figc6}-\ref{figc61c}.
\par Comparing Fig. \ref{figc6} and \ref{figc61} with \ref{figc1} in Case 1, MCs converge more slowly under the Zone-free distributed reset control scheme, and the larger $\Delta$ is, the slower the convergence speed. In addition, from Fig. \ref{figc6} and \ref{figc61}, the introduction of $\Delta$ causes MCs overshoot, and the larger $\Delta$ is, the greater the overshoot amplitude is. 
\par Comparing Fig. \ref{figc6c} and \ref{figc61c} with \ref{figc1c} in Case 1, in particular, the reset time is delayed, and the larger $\Delta$ is, the more delayed the reset time is. Besides, since the reset operation at some times is ignored, the dynamic gains sometimes appear negative. In other words, $\xi_i$ and $\nu_i$ no longer keep the same number at all times, which is not conducive to the stability and dynamic performance of the system. As described in Remark \ref{remark 5}, it is unwise to select too large $\Delta$, and there is a trade-off between system performance and Zeno-free behavior.
\begin{figure}
	\centering
	\includegraphics[width=8cm]{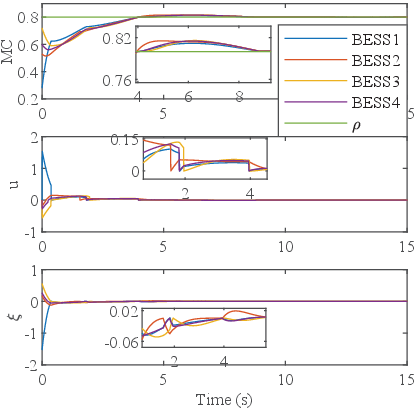} \caption{Simulation results of a Zone-free distributed reset control scheme with $\Delta=1$ in Case 6.\label{figc6}}
\end{figure}
\begin{figure}
	\centering
	\includegraphics[width=8cm]{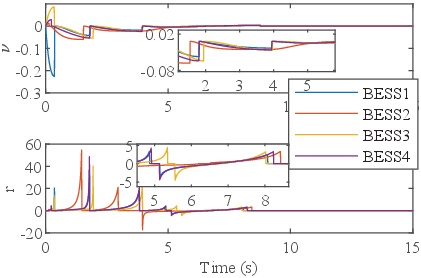} \caption{Simulation results of a Zone-free distributed reset control scheme with $\Delta=1$ in Case 6.\label{figc6c}}
\end{figure}
\begin{figure}
	\centering
	\includegraphics[width=8cm]{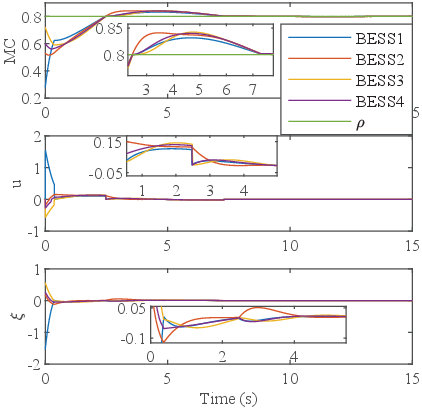} \caption{Simulation results of a Zone-free distributed reset control scheme with $\Delta=1$ in Case 6.\label{figc61}}
\end{figure}
\begin{figure}
	\centering
	\includegraphics[width=8cm]{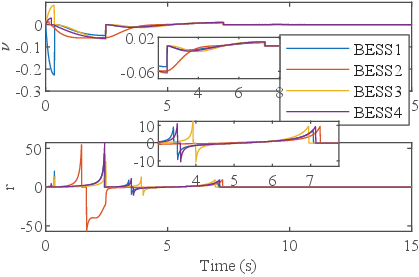} \caption{Simulation results of a Zone-free distributed reset control scheme with $\Delta=1$ in Case 6.\label{figc61c}}
\end{figure}
\subsection{Case 7: Test on a large-scale system}
\begin{figure}
	\centering
	\includegraphics[width=8cm]{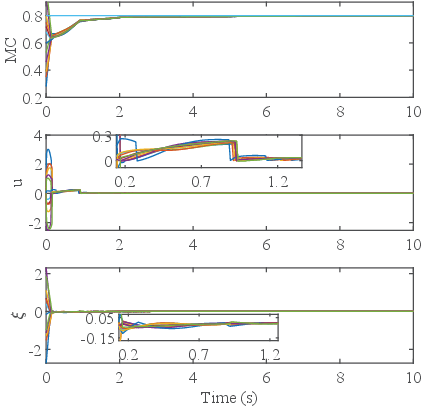} \caption{Simulation results on a large-scale system in Case 7.\label{figc7}}
\end{figure}
\begin{figure}
	\centering
	\includegraphics[width=8cm]{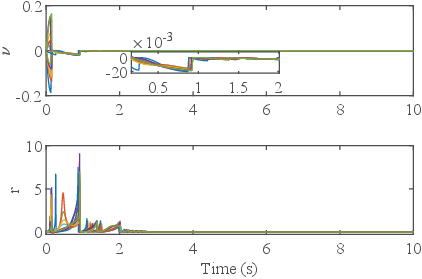} \caption{Simulation results on a large-scale system in Case 7.\label{figc7c}}
\end{figure}
\par To meet the needs of wide area microgrid, the number of BESSs has been increased to 12 based on Case 1, and the same number of agents have been added accordingly. And the simulation results are shown in Fig. \ref{figc7}-\ref{figc7c}.
\par From the simulation results, it can be seen that the designed reset mechanism can cope with the increase in the number of BESS and exhibit the characteristic of integral terms being reset several times. In the end, MCs can converge to the electricity price, while controlling the input, state difference, and integration term to converge to 0. This indicates that the reset mechanism designed is suitable for a large-scale system.
\section{Conclusion}
\par To solve the ED problem in BESS, this paper constructs a composite ED objective convex function containing capacity degradation, power internal loss and electric trading in BESS on a small time-scale so as to obtain MC. And then a MC consensus distributed scheme with reset mechanism is designed. This scheme is an extension of PI controller. Under the action of the reset mechanism, the sign of the integral term in the control input is always aligned with the proportional term. Compared with a PC, MC has the characteristics of fast response and convergence speed, high control accuracy and no overshoot, and no chattering compared with a finite-time controller. The consensus and regularity conditions given guarantee this property. In addition, the Zeno behavior of the reset mechanism is well avoided by introducing a linear reset interval about a minimum dwell time. This paper is the optimization of MC consensus scheme, which can be well transplanted to the ED problem of MG and distributed integrated energy network with convex objective function, respectively.
\appendices

\ifCLASSOPTIONcaptionsoff
  \newpage
\fi
\small
\bibliographystyle{IEEEtran}
\bibliography{re1.bib}
\end{document}